\newtheorem{thm}{Theorem}
\newtheorem{prp}{Proposition}
\title{\LARGE \bf
Latency-Robust Control of High-Speed Signal-Free Intersections
}
\author{Yang Liu, Zev Nicolai-Scanio, Zhong-Ping Jiang and Li Jin
\thanks{This work was in part supported by NYU Tandon School of Engineering, C2SMART University Transportation Center, and US NSF grant EPCN-1903781.}
\thanks{The authors are with the Tandon School of Engineering, New York University, Brooklyn, New York, USA.
Y. Liu is also with the Department of Automation, Shanghai Jiao Tong University, Shanghai, China.
Z. Nicolai-Scanio is also with the John A. Paulson School of Engineering and Applied Sciences, Harvard University, Cambridge, Massachusetts, USA.
Emails:  liuyang@nyu.edu, zn2054@nyu.edu,
zjiang@nyu.edu,
lijin@nyu.edu}
}
\begin{document}
\maketitle

\begin{abstract}
High-speed signal-free intersections are a novel urban traffic operations enabled by connected and autonomous vehicles.
However, the impact of communication latency on intersection performance has not been well understood.
In this paper, we consider vehicle coordination at signal-free intersections with latency. We focus on two questions: (i) how to ensure latency-resiliency of the coordination algorithm, and (ii) how latency affects the intersection's capacity.
We consider a trajectory-based model with bounded speed uncertainties. Latency leads to uncertain state observation.
We propose a piecewise-linear control law that ensures safety (avoidance of interference) as long as the initial condition is safe.
We also analytically quantify the throughput that the proposed control can attain in the face of latency.
\end{abstract}

{\bf Keywords}:
Signal-free intersections, connected and autonomous vehicles, robust control.

\section{Introduction}

%background
Recent progress in the technology of connected and autonomous vehicles (CAVs) motivates high-speed signal-free intersections \cite{dresner2008multiagent,wu2012cooperative}.
Conventional intersections are either signalized or regulated by stop signs.
With the help of vehicle-to-vehicle/infrastructure (V2V/V2I) connectivity, CAVs can cross intersections at high speeds; such operations are infeasible with human drivers due to high response time and lack of V2V communications \cite{calvert2018traffic}.
The safety and efficiency of signal-free intersections heavily rely on the quality of V2V/V2I communications, which is not always perfect due to power, computation, and bandwidth limitations and hardware disruptions \cite{kamini2010vanet}.
It is widely known that even small delays can cause performance deterioration or sometimes instability in dynamical systems \cite{karafyllis2011stability}.
However, very limited work has been done on analysis of the impact of communication latency on performance of such intersections.

%overview
In this paper, we consider the vehicle coordination problem for a signal-free intersection subject to a non-zero latency; see Fig.~\ref{fig:intersection}.
\begin{figure}[hbt]
    \centering
    \includegraphics[width=0.3\textwidth]{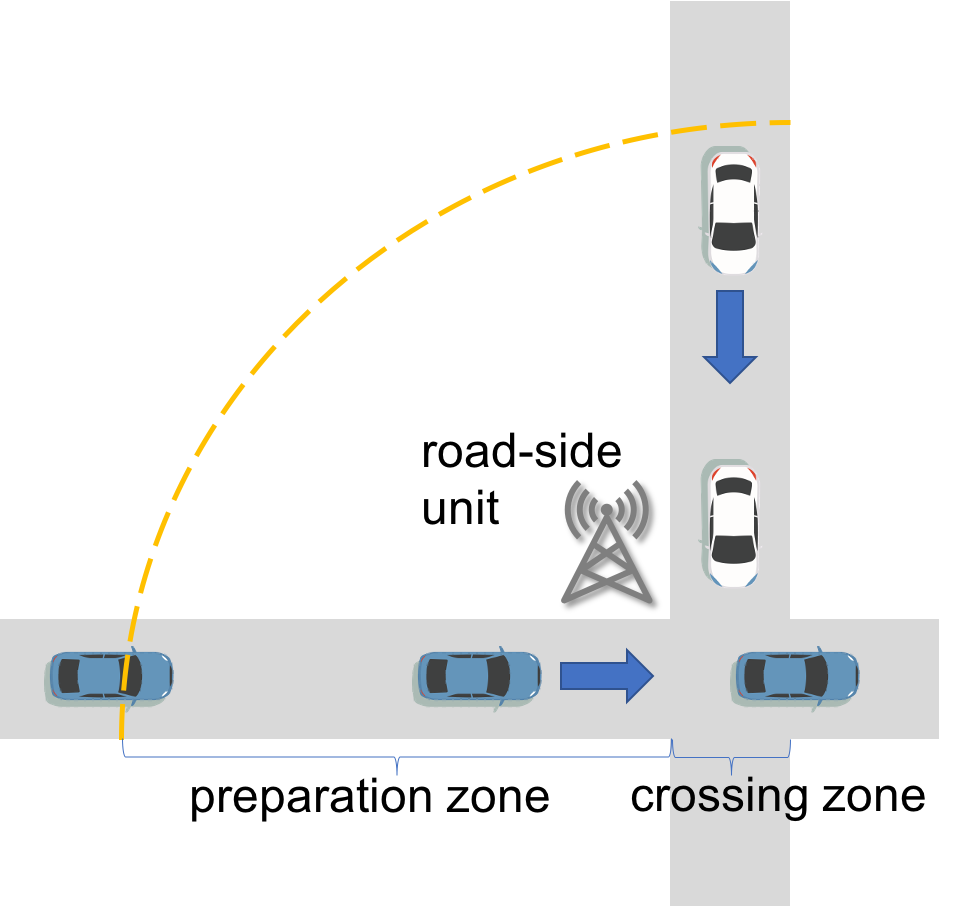}
    \caption{A road-side unit coordinates vehicles in the preparation zone before they entering the crossing zone.}
    \label{fig:intersection}
\end{figure}
The main questions that we ask are:
\begin{enumerate}
    \item How to design a vehicle coordination algorithm that is robust against communication latency?
    \item How to quantify the relation between communication latency and intersection capacity?
\end{enumerate}
To address the above questions, we consider a trajectory-based model for vehicles in a neighborhood of an intersection with a centralized controller.
Latency leads to delayed state observation, and vehicle trajectories are subject to bounded uncertainty.
We design a robust vehicle coordination algorithm that ensures all-time safety as long as the initial condition is within a safety set (Theorem~\ref{thm:1}).
The proposed algorithm also minimizes vehicle traverse time under the safety constraint (Theorem~\ref{thm:2}).
We also derive an analytical relation between key parameters, including latency, coordination step size, and speed uncertainty, and intersection capacity (Proposition~\ref{prp:capacity}).

%literature
Recently, novel models and methods for signal-free intersections with ideal V2V/V2I connectivity have been proposed \cite{zhu2015linear,malikopoulos2018decentralized,xu2018distributed,miculescu2019polling,mirheli2019consensus,xu2019cooperative};
the proposed control algorithms have been well validated in the absence of latency or other types of communication imperfections.
However, very limited results are available for the latency-prone setting.
Although latency has been studied by the networking community \cite{sun2015radio,dey2016vehicle} and the control community \cite{karafyllis2011stability}, we still lack a connection between communication performance and transportation performance. This gap motivates the application of robust control ideas to vehicle coordination.
There exists robust optimization/control-based methods for low-level trajectory planning \cite{kuwata2010cooperative,ploeg2014graceful,li2018fault}, but these approaches are not directly applicable to intersection control, which is at a higher level.
In addition, very limited results are available for quantifying the impact of communication latency on key transportation performance metrics such as capacity (throughput).

%methodology
To address these challenges, we consider a trajectory-based model.
A road-side unit (centralized controller) collects real-time kinematic information and controls the speed of all vehicles in a point-to-point manner.
To account for the uncertainty of vehicle trajectories, we consider a set-valued update mapping, which means that the realized speed of a vehicle will fall in a neighborhood of the specified speed rather than exactly at it.
Due to such uncertainty, latency will compromise the accuracy of state observation.
We propose a robust algorithm that controls the speed of vehicles and ensures safety in the face of latency.
The algorithm can be written as an explicit control law and is thus computationally efficient and easy to implement.

%results
The main results of this paper characterize properties and performance of the proposed robust vehicle coordination algorithm.
First, we show that for any initial condition in a safety set, the proposed algorithm ensures safety for all positive times (Theorem~\ref{thm:1} and~\ref{thm:2}).
Second, we quantify the throughput that can be attained by the proposed algorithm (Proposition~\ref{prp:capacity}).

%structure
The rest of this paper is organized as follows.
Section~\ref{sec_model} provides the modeling of the intersection and the formulation of the problem.
In Section~\ref{sec_control} we provides the construction of the estimator and the controller.
Then in Section~\ref{sec_prop} we study the properties of the proposed controller.
Finally we conclude the paper in Section~\ref{sec_conclusion}.
%%%%%%%%%%%%%%%%%%%%%%%%%%%%%% 1 page
\section{Modeling}\label{sec_model}%

Consider the neighborhood of the two-direction intersection in Fig.~\ref{fig:intersection}.
We label the two orthogonal routes as 1 and 2. 
Suppose that there are $n_1$ (resp. $n_2$) vehicles on route 1 (resp. 2).
The state of the intersection is described by $x=(x^1,x^2)\in\mathcal X:=[-L,L]^{n_1+n_2}$ and $v=(v^1,v^2)\in\mathcal U:=[0,\bar v]^{n_1+n_2}$, where
\begin{enumerate}
    \item $x^k=[x^k_1(t)\ x^k_2(t) \cdots\ x^k_{n_k}(t)]^T$ denotes the positions of vehicles on route $k$ at time $t\in\mathbb Z_{\ge0}$,
    
    \item $v^k=[v^k_1(t)\ v^k_2(t) \cdots\ v^k_{n_k}(t)]^T$ denotes the speeds of vehicles on route $k$ at time $t\in\mathbb Z_{\ge0}$,
    
    \item $L$ is the radius of the intersection region,
    
    \item $\bar v$ is the maximal allowable speed.
\end{enumerate}
We index the vehicles by the order of entering the neighborhood; i.e. vehicle 0 on route $k$ is the last vehicle that arrived in the neighborhood before vehicle 1 on route $k$.
When we consider vehicle $i$, we assume that the trajectories of all vehicles in front of vehicle $i$ have been optimized and fixed. 
Our task here is to optimize the trajectory $x_i^j(t)$ and $v_i^j(t)$ of vehicle 0 under the constraint due to other vehicles' trajectories recursively.

The decision variables is the sequence of \emph{target speeds} $\{u(t);t\in\mathbb Z_{\ge0}\}$.
CAVs attempt to track the target speed.
Specifically, $v(t)$ is jointly determined by $v(t-1)$ and $u(t)$ as follows.
For $x\in\mathbb R_{\ge0}$ and $r\in\mathbb R_{\ge0}$, let $B_r(x)$ be the non-negative neighborhood
$$
B_r(x):=[x-r,x+r]\cap\mathbb R_{\ge0}.
$$
Then, for each $k$ and each $i$,
\begin{align*}
    &v^k_i(t)\in 
    Q^k_i(v(t-1),u(t)) \\
    &:=\begin{cases}
    B_\epsilon(v(t-1)-\bar a) 
    & u(t)<v(t-1)-\bar a,
    \\
    B_{\epsilon}(u(t))
    & u(t)\in B_{\bar a}(v(t-1)),
    \\
    B_\epsilon(v(t-1)+\bar a)
    & u(t)>v(t-1)-\bar a.
    \end{cases}
\end{align*}
The neighborhood size $\epsilon$ depends on the type of vehicles: autonomous vehicles (AVs) are associated with a smaller size than non-autonomous vehicles (non-AVs).

We assume that the control input $v(t)$ is determined by a control law $\mu:\mathcal X\times\mathcal U\to[0,\bar v]^{2n}$ such that, in the ideal setting without latency,
$$
u_i(t)=\mu_i(x_i(t-1),v_i(t-1),x_{i-1}(t-1),v_{i-1}(t-1)).
$$
In the presence of a latency $\theta>0$, state estimation becomes set-valued.
\begin{figure}[hbt]
    \centering
    \includegraphics[trim=2cm 7cm 0 3cm, clip,width=0.6\textwidth]{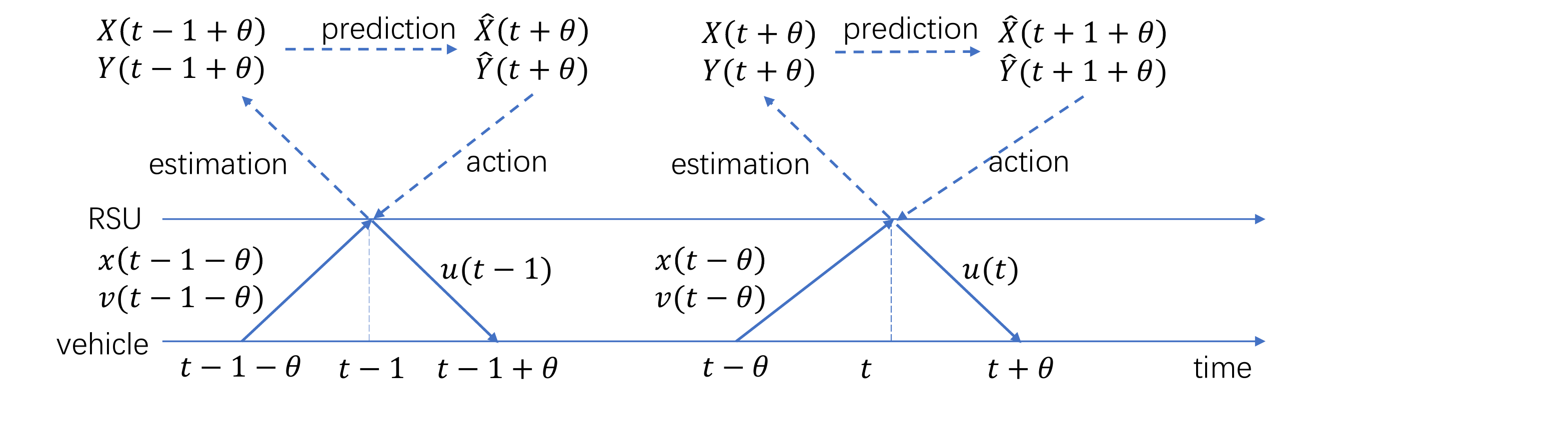}
    \caption{Latency delays sending and actuation.}
    \label{fig:latency}
\end{figure}
That is, the reported state $(\hat x(t-1),\hat v(t-1))$ that the RSU receives at time $t-1$ indicates that
\begin{align*}
    &x(t-1-\theta)=\hat x(t-1),\\
    &v(t-1-\theta)=\hat v(t-1).
\end{align*}
The RSU estimates the actual state with sets
\begin{align*}
    &x(t)\in X_\theta(\hat x(t),\hat v(t)),\\
    &v(t)\in V_\theta(\hat x(t),\hat v(t)),
\end{align*}
where $X_\theta$ and $V_\theta$ are estimators to be designed in Section~\ref{sec_estimate}.
When vehicles receive the instruction $u(t)$ at time $t+\theta$, the speeds are updated by
\begin{align*}
    v(t+\theta)\in Q\Big(V_{2\theta}(\hat x(t),\hat v(t)),u(t)\Big),
\end{align*}
where $Q$ is the set-valued version of the speed update function.
The positions are updated by
\begin{align*}
    x(t+\theta+1)=x(t)+\frac{1}{2}\Big(v(t+\theta)+v(t+\theta+1)\Big).
\end{align*}
Fig.~\ref{fig:latency} illustrates the information flow with latency.

The control law must satisfy the \emph{safety constraint}, which lower-bounds the headway between vehicles:

\noindent(i) $i \neq j,k=1,2$:
\begin{align*}
    &
    |x_i^k(t)-x_j^k(t)|
    \begin{cases}
    \ge hv_j^k(t)
    & x_i^k(t)>x_j^k(t),\\
    \ge hv_i^k(t)
    & x_i^k(t)<x_j^k(t),
    \end{cases}
\end{align*}
(ii) $k_1\neq k_2, x_i^{k_1}\le -R,x_j^{k_2}\le -R$:
\begin{align*}
    &
    |x_i^{k_1}(t)-x_j^{k_2}(t)| 
    \begin{cases}
    \ge hv_j^{k_1}(t)
    & x_i^{k_1}(t)>x_j^{k_2}(t),\\
    \ge hv_i^{k_1}(t)
    & x_i^{k_1}(t)<x_j^{k_2}(t),
    \end{cases}
\end{align*}
where $R$ is the radius of the interference region; see Fig.~\ref{fig:intersection}.
\section{Control design}\label{sec_control}

The objective of this section is to design a controller that coordinates the speeds of vehicles in the preparation zone. As an intermediate step, we will develop an estimator that is robust to latency.

\subsection{Estimator design} \label{sec_estimate}

After the RSU receives the state variables of the vehicle at time $t$, then we need to estimate the position and speed information at time $t+\theta$.
The design problem can be formulated as follows: 

{\bf Given}: model parameters $\theta, \epsilon$ and observed variables $x(t-\theta), v(t-\theta), u(t-1), u(t)$.\\
\indent {\bf Estimate}: $x(t+\theta)$ and $v(t+\theta)$.

Under the assumption that the latency $\theta$ is much less than the interval of a time step, then if the vehicle follow the control input correctly, we have
\begin{align*}
    &u(t-1)-\epsilon \le v(t+\theta) \le u(t-1)+\epsilon,
    \\
    &x(t-\theta)+\theta(v(t-\theta)+u(t-1)-\epsilon)
    \le
    x(t+\theta)
    \le \\
    &x(t-\theta)+\theta(v(t-\theta)+u(t-1)+\epsilon).
\end{align*}
Therefore, we construct estimators as follows:
\begin{align*}
     &X_{2\theta}(x,v)\\
     &=\left\{x'\in\mathcal X:
     \begin{array}{l l}
     &x+\theta(v+u(t-1)-\epsilon)
    \le 
    x'
    \le \\
   & x+\theta(v+u(t-1)+\epsilon)
   \end{array}
    \right\},
    \\
    &V_{2\theta}(x,v)=\{v'\in\mathcal V:
    u(t-1)-\epsilon
    \le
    v'
    \le u(t-1)+\epsilon\}.
\end{align*}
Hence, when the control input $u(t)$ is received by the vehicles, the speeds are predicted by the predictor
\begin{align*}
    \tilde V(t+\theta+1)=Q\Big(V_{2\theta}(\hat x(t),\hat v(t)),u(t)\Big),
\end{align*}
and the positions are updated by
\begin{align*}
     \tilde X(t+\theta+1)=
     &X_{2\theta}\Big(\hat x(t),\hat v(t)\Big)+\frac{1}{2} \delta\Big(V_{2\theta}(\hat x(t),\hat v(t))+ \\ &Q(V_{2\theta}(\hat x(t),\hat v(t)),u(t))\Big);
\end{align*}
where $\delta$ denotes the interval of a time step.

We assume that the control input $u(t)$ can be reached under the limitation of the acceleration, then we have that
\begin{align*}
    \tilde V(t+\theta+1)=\{v'\in\mathcal V:
    u(t)-\epsilon
    \le
    v'
    \le u(t)+\epsilon\}.
\end{align*}
Therefore, we can get the interval of $ \tilde X(t+\theta+1)$ as follow:
\begin{align}\label{eq:interval} \nonumber
    &\Delta=\Big|X_{2\theta}(\hat x(t),\hat v(t))\Big|+\frac{1}{2} \delta\Big(\Big|V_{2\theta}(\hat x(t),\hat v(t))\Big|+\\
    &\Big| \tilde V(t+\theta+1)\Big|\Big)
    =2\theta\epsilon+\frac{1}{2} \delta(2\epsilon+2\epsilon)
    =2\epsilon(\theta+\delta).
\end{align}
%%%%%%%%%%%%%%%%%%%%%%%%%%%%%%
\subsection{Controller design}\label{sec_controller}

The objective of the control problem is to coordinate the trajectories of adjacent vehicles to discharge vehicles as fast as possible while avoiding interference.
The control input is the target speed assigned to each vehicle at each time step.
In the preparation zone, vehicles on two routes are coordinated separately. For route $k$, the controller $\mu$ must generate control inputs $u(t)$ at each time $t$ that satisfy the safety constraint.
The control problems can be formulated as follows:

{\bf Given}: model parameters $\delta$ (time step size), $\theta$ (latency), $\epsilon$ (speed uncertainty), $\bar a$ (maximum acceleration) and observed states $\hat x(t)$ (positions) and $\hat v(t)$ (speeds).

{\bf Determine}: $u(t)$ (target speed).

Note that since the uncertainty is small compared with the regulating capacity of the vehicle, we have $\epsilon \leq \bar a\delta$
Then for all $i,j,k_1,k_2$, the safety constraint is
\begin{align*}
    &\Big|\tilde X_i^{k_1}(t+1)-\tilde X_j^{k_2}(t+1)\Big|
    \ge \\
    &\begin{cases}
    h\tilde V_j^{k_2}(t+1)
    &k_1=k_2,\ X_i^{k_1}(t+1)>\tilde X_j^{k_2}(t+1),
    \\
    h\tilde V_i^{k_1}(t+1)
    &k_1=k_2,\ X_i^{k_1}(t+1)<\tilde X_j^{k_2}(t+1),
    \\
    \bar h\tilde V_j^{k_2}(t+1)
    &k_1\ne k_2,\ X_i^{k_1}(t+1)>\tilde X_j^{k_2}(t+1),
    \\
    \bar h\tilde V_i^{k_1}(t+1)
    &k_1\ne k_2,\ X_i^{k_1}(t+1)<\tilde X_j^{k_2}(t+1).
    \end{cases}
\end{align*}

Consider the vehicles on the same road, indexed by $1,\dots,n$.
For the vehicle indexed by $n$, which is the last vehicle entering the road, we set $u_n(t)$ to the maximal allowable speed or the speed the vehicle can reach with the maximal acceleration.
Then for $i=1,\dots,n-1$ we have the objective function:
\begin{align} \nonumber
\lambda_i(t)=&\tilde X_{(i-1)min}(t+\theta+1)-\tilde X_{imax}(t+\theta+1)-\delta_{min}
\\ \nonumber
=&\theta\Big(\hat v_{i-1}(t)-\hat v_i(t)+u_{i-1}(t-1)-u_i(t-1)\Big)-2\delta\epsilon \\ \nonumber
&-2\theta\epsilon+\hat x_{i-1}(t)-\hat x_i(t)+\frac{1}{2}\delta\Big(u_{i-1}(t-1) \\ \label{eq:function}
&-u_i(t-1)+u_{i-1}(t)-u_i(t)\Big)-h(u_i(t)),
\end{align}
where $\delta_{min}=hu_i(t)$ is the minimum allowable distance between two vehicles.

Consider the feasibility of the speed under the limitation of the acceleration, we have
\begin{align*}
    u(t)\in [u(t-1)-a\delta+\epsilon,u(t-1)+a\delta-\epsilon]
\end{align*}
denoted by $U_{feasible}(t)$.
Then we construct the controller $\mu$ with the following law to obtain $u(t)$:
\begin{align*}
u_i(t)&=\mu_i(x_i(t-\theta),v_i(t-\theta),x_{i-1}(t-\theta),v_{i-1}(t-\theta))\\
&=\mathop{\arg \min}\limits_{u(t)\in U_{feasible}(t)}\{\lambda_i(t)|\lambda_i(t) \geq 0\}.
\end{align*}

\section{Properties of the Proposed Controller}\label{sec_prop}

In this section, we study key properties of the controller presented in the previous section.
First, we study the criterion for the proposed controller to exist (i.e. to satisfy the safety constraint) at a certain time $t$.
Second, we determine a set of initial conditions that guarantee safety for all times.
Finally, we quantify the throughput that the proposed controller can attain.

%%%%%%%%%%%%%%%%%%%%%%
\subsection{Safety criteria for one time step}
In this subsection, we discuss the condition to ensure the safety and if objective function $\lambda_i(t)$ can be set to 0.
\begin{thm}\label{thm:1}
The existence of the safety control input depends on the following conditions:
\begin{enumerate}
    \item 
    There exists a safe control input $u(t)$ if and only if it satisfies the following condition:
    \begin{align*} \nonumber
        &\hat x_i(t)-\hat x_{i-1}(t)+\theta\Big(\hat v_i(t)-\hat v_{i-1}(t)\Big)+(\theta+\delta+h)\\
        &\times u_i(t-1)
        -(\theta+\frac{\delta}{2})u_{i-1}(t-1)-\frac{1}{2}\delta u_{i-1}(t)+2\epsilon\theta  \\ \tag{\text{condition 1}}
        &+\frac{3}{2}\delta\epsilon-\frac{1}{2}\bar a\delta^2-h\bar a\delta+h\epsilon \le 0.
    \end{align*}
    \item
    Furthermore, there exists a safe control input $u(t)$ such that vehicle $i$ is able to keep the minimal allowable distance from the leading vehicle if and only if condition 1 and the following condition hold:
    \begin{align*} \nonumber
        &\hat x_i(t)-\hat x_{i-1}(t)+\theta\Big(\hat v_i(t)-\hat v_{i-1}(t)\Big)+(\theta+\delta+\\
        &h)u_i(t-1)-(\theta+\frac{\delta}{2})u_{i-1}(t-1) -\frac{1}{2}\delta u_{i-1}(t)+2\epsilon\theta+\\ \tag{\text{condition 2}}
        &\frac{1}{2}\delta\epsilon+\frac{1}{2}\bar a\delta^2+h\bar a\delta-h\epsilon \ge 0.
    \end{align*}
    \end{enumerate}
\end{thm}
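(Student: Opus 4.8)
The plan is to exploit the fact that, once the trajectory of the leading vehicle $i-1$ and all past controls are fixed, the safety margin $\lambda_i(t)$ in \eqref{eq:function} is an \emph{affine} function of the single remaining decision variable $u_i(t)$. Collecting the two terms that contain $u_i(t)$---namely $-\frac{1}{2}\delta u_i(t)$ from the half-step position average and $-hu_i(t)$ from $\delta_{min}=hu_i(t)$---its slope equals $-(\frac{1}{2}\delta+h)<0$, so $\lambda_i(t)$ is strictly decreasing in $u_i(t)$. Every other quantity entering $\lambda_i(t)$, i.e. $\hat x_i(t),\hat x_{i-1}(t),\hat v_i(t),\hat v_{i-1}(t),u_i(t-1),u_{i-1}(t-1),u_{i-1}(t)$, is known at decision time. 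Recognizing this monotonicity is the whole conceptual content; the rest is substitution.

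For the first claim, I would observe that a safe control input exists precisely when the feasible set $U_{feasible}(t)=[\,u_i(t-1)-\bar a\delta+\epsilon,\ u_i(t-1)+\bar a\delta-\epsilon\,]$ contains some $u_i(t)$ with $\lambda_i(t)\ge 0$. Because $\lambda_i(t)$ is decreasing in $u_i(t)$, this is equivalent to $\max_{u_i(t)\in U_{feasible}(t)}\lambda_i(t)\ge 0$, and the maximum is attained at the left endpoint $u_i(t)=u_i(t-1)-\bar a\delta+\epsilon$. Substituting this value into \eqref{eq:function}, using the explicit forms of $X_{2\theta}$, $V_{2\theta}$ and $\tilde V$ from Section~\ref{sec_estimate} so that $\tilde X_{imax}$ and $\tilde X_{(i-1)min}$ are the appropriate endpoints of the predicted position intervals, and rearranging, should yield exactly condition~1.

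For the second claim, I would note that vehicle $i$ can hold \emph{exactly} the minimal allowable distance iff there is a feasible $u_i(t)$ with $\lambda_i(t)=0$. Since $\lambda_i(t)$ is continuous and monotone on $U_{feasible}(t)$, the value $0$ lies in its range iff $\max\lambda_i(t)\ge 0$ and $\min\lambda_i(t)\le 0$. The first inequality is condition~1; the second is obtained by evaluating $\lambda_i(t)$ at the right endpoint $u_i(t)=u_i(t-1)+\bar a\delta-\epsilon$, where $\lambda_i$ is minimized, and requiring nonpositivity, which after the same substitution and rearrangement gives condition~2. The intermediate value theorem then produces a feasible $u_i(t)$ achieving $\lambda_i(t)=0$, and the ``only if'' direction follows by reversing the endpoint argument.

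The main obstacle is purely the bookkeeping of the $\epsilon$-contributions. Each worst-case position term $\tilde X_{imax}$ and $\tilde X_{(i-1)min}$ accumulates an $\epsilon$-part of order $\theta\epsilon$ from $X_{2\theta}$ and of order $\delta\epsilon$ from the two half-step averages $V_{2\theta}$ and $\tilde V$ (consistent with the interval width $\Delta=2\epsilon(\theta+\delta)$ in \eqref{eq:interval}), and these must be combined with the additional $\pm\epsilon$ shift incurred when the decision variable is driven to an endpoint of $U_{feasible}(t)$. Carefully tracking these $\theta\epsilon$ and $\delta\epsilon$ constants, together with the $\frac{1}{2}\bar a\delta^2$ and $h\bar a\delta$ terms generated by the substitution $u_i(t)=u_i(t-1)\mp\bar a\delta\pm\epsilon$, is where the algebra must be done with care to land on the stated constants; everything else is immediate from the affine monotone structure.
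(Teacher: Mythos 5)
Your proposal is correct and takes essentially the same route as the paper's proof: bound $u_i(t)$ by the endpoints of $U_{feasible}(t)=[u_i(t-1)-\bar a\delta+\epsilon,\ u_i(t-1)+\bar a\delta-\epsilon]$, substitute into the predicted-position intervals to get conditions 1 and 2, and use monotonicity of $\lambda_i(t)$ in $u_i(t)$ together with the sign change at the two endpoints to produce the input achieving $\lambda_i(t)=0$. Your orientation of the endpoint signs ($\lambda_i\ge 0$ at the left endpoint, $\le 0$ at the right) is the one consistent with $\lambda_i$ being decreasing; the paper's displayed case distinction appears to have these swapped, which does not affect the intermediate-value argument.
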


\begin{proof}
\emph{Condition 1.}

($\Rightarrow$)
Suppose that there exists a control input $u(t)$ such that vehicle $i$ is able to keep the allowable distance from the leading vehicle, then we have
\begin{align*}
    \max{\tilde X_i(t+1+\theta)}\le \min{\tilde X_{i-1}(t+1+\theta)-h\cdot  u_i(t)};
\end{align*}
that is
\begin{align*}
    &\hat x_i(t)+\theta(\hat v_i(t)+u_i(t)+\epsilon)+\frac{\delta}{2}(u_i(t-1)+u_i(t)+2\epsilon) \\
    &\le  \hat x_{i-1}(t)+\theta(\hat v_{i-1}(t)+u_{i-1}(t-1)-\epsilon)+\frac{1}{2}\delta(u_{i-1}(t-1)\\
    &\quad+u_{i-1}(t)-2\epsilon)-h\cdot u_i(t).
\end{align*}
Considering the limitation of deceleration, we have
$
    u_i(t)\ge u_i(t-1)-\bar a\cdot \delta+\epsilon.
$
Therefore, we have that
\begin{align*}
    &\hat x_i(t)+\theta(\hat v_i(t)+u_i(t-1)+\epsilon)+(u_i(t-1)+\epsilon)\delta \\ &\quad-\frac{1}{2}(\bar a-\frac{\epsilon}{\delta})\delta^2 \\
    &\le\hat x_{i-1}(t)+\theta(\hat v_{i-1}(t)+u_{i-1}(t-1)-\epsilon)\\
    &\quad+\frac{1}{2}\delta(u_{i-1}(t-1)+u_{i-1}(t)-2\epsilon)-h\cdot u_i(t) \\
    &\le\hat x_{i-1}(t)+\theta(\hat v_{i-1}(t)+u_{i-1}(t-1)-\epsilon)+\frac{1}{2}\delta\\
    &\quad\times(u_{i-1}(t-1)+u_{i-1}(t)-2\epsilon)-h(u_i(t-1)-\bar a\delta+\epsilon),
\end{align*}
which implies
\begin{align*}
     &\hat x_i(t)-\hat x_{i-1}(t)+\theta\Big(\hat v_i(t)-\hat v_{i-1}(t)\Big)+(\theta+\delta+h)u_i(t-1)\\
     &-(\theta+\frac{\delta}{2})u_{i-1}(t-1)-\frac{1}{2}\delta u_{i-1}(t)+2\epsilon\theta+\frac{3}{2}\delta\epsilon-\frac{1}{2}\bar a\delta^2\\
     &-h\bar a\delta+h\epsilon \le 0.
\end{align*}
($\Leftarrow$)
When condition 1 holds, if $u_i(t)=u_i(t-1)+\epsilon-\bar a\delta$, we have
\begin{align*}
     &\max \tilde X_i(t+1+\theta)=\hat x_i(t)+\theta(\hat v_i(t)+u_i(t-1)+\epsilon)\\
     &\quad+\frac{\delta}{2}(2u_i(t-1)+3\epsilon-\bar a\delta)\\
     &=\hat x_i(t)+\theta(\hat v_i(t)+u_i(t-1)+\epsilon) +(u_i(t-1)+\epsilon)\delta\\
     &\quad-\frac{1}{2}(\bar a-\frac{\epsilon}{\delta})\cdot \delta^2 \\
     &\le \hat x_{i-1}(t)+\theta(\hat v_{i-1}(t)+u_{i-1}(t-1)-\epsilon)+\frac{1}{2}\delta(u_{i-1}(t-1)\\
     &\quad+u_{i-1}(t)-2\epsilon)-h(u_i(t-1)+\epsilon-\bar a\cdot \delta) \\
     &= \min \tilde X_{i-1}(t+1+\theta)-h\cdot u_i(t).
\end{align*}

\noindent\emph{Condition 2.}

($\Rightarrow$)
Suppose that there exists a control input $u(t)$ such that vehicle $i$ is able to keep the minimal allowable distance from the leading vehicle, then we have
\begin{align*}
    \max{\tilde X_i(t+1+\theta)}= \min{\tilde X_{i-1}(t+1+\theta)-h\cdot  u_i(t)},
\end{align*}
that is
\begin{align*}
    &\hat x_i(t)+\theta(\hat v_i(t)+u_i(t)+\epsilon)+\frac{\delta}{2}(u_i(t-1)+u_i(t)+2\epsilon) \\
    = & \hat x_{i-1}(t)+\theta(\hat v_{i-1}(t)+u_{i-1}(t-1)-\epsilon)+\frac{1}{2}\delta(u_{i-1}(t-1)\\
    &+u_{i-1}(t)-2\epsilon)-h\cdot u_i(t).
\end{align*}
Considering the limitation of acceleration, we have
\begin{align*}
    u_i(t)\le u_i(t-1)+\bar a\cdot \delta-\epsilon.
\end{align*}
Therefore, we have that
\begin{align*}
    &\hat x_i(t)+\theta(\hat v_i(t)+u_i(t-1)+\epsilon) +(u_i(t-1)+\epsilon)\delta+\\
    &\frac{1}{2}(\bar a-\frac{\epsilon}{\delta})\cdot \delta^2 \\
    \ge&\hat x_i(t)+\theta(\hat v_i(t)+u_i(t-1)+\epsilon)+\frac{\delta}{2}(u_i(t-1)+u_i(t)+2\epsilon) \\
    =&\hat x_{i-1}(t)+\theta(\hat v_{i-1}(t)+u_{i-1}(t-1)-\epsilon)+\frac{1}{2}\delta(u_{i-1}(t-1)\\
    &+u_{i-1}(t)-2\epsilon)-h\cdot u_i(t) \\
    \ge&\hat x_{i-1}(t)+\theta(\hat v_{i-1}(t)+u_{i-1}(t-1)-\epsilon)+\frac{1}{2}\delta(u_{i-1}(t-1)\\
    &+u_{i-1}(t)-2\epsilon)-h(u_i(t-1)-\epsilon+\bar a\cdot \delta),
\end{align*}
that is
\begin{align*}
    &\hat x_i(t)-\hat x_{i-1}(t)+\theta\Big(\hat v_i(t)-\hat v_{i-1}(t)\Big)+(\theta+\delta+h)u_i(t-1)\\
    &-(\theta+\frac{\delta}{2})u_{i-1}(t-1)-\frac{1}{2}\delta u_{i-1}(t)+2\epsilon\theta+\frac{1}{2}\delta\epsilon+\frac{1}{2}\bar a\delta^2\\
    &+h\bar a\delta-h\epsilon \ge 0.
\end{align*}
($\Leftarrow$)
We have that $\lambda_i(t)=\min \tilde X_{i-1}(t+1+\theta)-\max \tilde X_i(t+1+\theta)-h\cdot u_i(t)$.
By Equation~(\ref{eq:function}) we have $\lambda_i(t)$ decreases monotonically with $u_i(t)$.
Then we have
\begin{align*}
    \lambda_i(t)\begin{cases}
    \le 0,\ u_i(t)=u_i(t-1)-\bar a\delta+\epsilon, \\
    \ge 0,\ u_i(t)=u_i(t-1)+\bar a\delta-\epsilon.
    \end{cases}
\end{align*}
Therefore, there must exists a control input $u_i(t)$ such that vehicle $i$ is able to just keep the minimal allowable distance from its leading vehicle.
\end{proof}

Intuitively, if condition 1 in Theorem~\ref{thm:1} holds, then we have that when the vehicle takes the maximal deceleration, it can keep a safe distance from its leading vehicle.
Then we have that under the limitation of the deceleration, we can find a feasible control input $u(t)$ such that the vehicle satisfies the safety constraint (i.e. it can keep the safe distance from its leading vehicle).
If condition 2 in Theorem~\ref{thm:1} holds, then we have that when the vehicle takes the maximal acceleration, it can keep a distance from its leading vehicle which is less than the minimal allowable distance.
Then we know that under the limitation of the acceleration, we can find a feasible control input $u(t)$ such that the vehicle can keep a distance from its leading vehicle which is not more than the minimal allowable distance.
Therefore, if both condition 1 and condition 2 hold, then we can certainly find a control input $u(t)$ such that the vehicle can just keep the minimal allowable distance from its leading vehicle, which leads to the best capacity.

Note that when condition 1 holds, we have
\begin{align*}
    &(\theta+\delta+h)u_i(t-1)\\
    &\le \hat x_{i-1}(t)-\hat x_i(t) +\theta\Big(\hat v_{i-1}(t)+u_{i-1}(t-1)\Big)\\
    &\quad-\theta
    \Big(
    \hat v_i(t)+\epsilon\Big)+\frac{1}{2}\delta\Big(u_{i-1}(t-1)+u_{i-1}(t)\\
    &\quad-2\epsilon\Big)-\epsilon\delta+h(\bar a\delta-\epsilon)+\frac{1}{2}(\bar a-\frac{\epsilon}{\delta})\delta^2 \\
    &=A+h(\bar a\delta-\epsilon)+\frac{1}{2}(\bar a-\frac{\epsilon}{\delta})\delta^2,
\end{align*}
where $A=\hat x_{i-1}(t)-\hat x_i(t) +\theta\Big(\hat v_{i-1}(t)+u_{i-1}(t-1)\Big)-\theta
    \Big(
    \hat v_i(t)+\epsilon\Big)+\frac{1}{2}\delta\Big(u_{i-1}(t-1)+u_{i-1}(t)-2\epsilon\Big)-\epsilon\delta$.

When condition 2 holds, we have
\begin{align*}
    &(\theta+\delta+h)u_i(t-1)\\
    \le& \hat x_{i-1}(t)-\hat x_i(t) +\theta\Big(\hat v_{i-1}(t)+u_{i-1}(t-1)\Big)-\\
    &\theta
    \Big(
    \hat v_i(t)+\epsilon\Big)+\frac{1}{2}\delta\Big(u_{i-1}(t-1)+u_{i-1}(t)\\
    &-2\epsilon\Big)-\epsilon\delta-h(\bar a\delta-\epsilon)-\frac{1}{2}(\bar a-\frac{\epsilon}{\delta})\delta^2 \\
    =&A-h(\bar a\delta-\epsilon)-\frac{1}{2}(\bar a-\frac{\epsilon}{\delta})\delta^2.
\end{align*}
Since $\epsilon$ is much smaller than the maximal change in speed every time step, therefore, we have $h(\bar a\delta-\epsilon)+\frac{1}{2}(\bar a-\frac{\epsilon}{\delta})\delta^2>0$.
Then we know that the set of the possible control input satisfying both conditions 1 and 2 is not an empty set.

%%%%%%%%%%%%%%%%%%%%%%%%%%%%%%%%%%%%
\subsection{Safety criterion for time series}
Now, we extend the one-step results in the previous subsection to the case of time series. Specifically, we derive a criterion for initial condition to ensure existence of safe control inputs in all subsequent steps.

\begin{thm}\label{thm:2}
Consider the observed initial state $\hat x(1),\hat v(t)$ and initial control input $u(0)$.
\begin{enumerate}
    \item
    There exists a safe input for all subsequent times if the following conditions hold:\\
    (a) For $t=1$ and for any $i=1,\dots,n$,
    \begin{align*}
        &\hat x_i(1)-\hat x_{i-1}(1)+\theta\Big(\hat v_i(1)-\hat v_{i-1}(1)\Big)+(\theta+\delta+\\
        &h)u_i(0)-(\theta+\frac{\delta}{2})u_{i-1}(0)-\frac{1}{2}\delta u_{i-1}(1)\\
        &+2\epsilon\theta+\frac{3}{2}\delta\epsilon-\frac{1}{2}\bar a\delta^2-h\bar a\delta+h\epsilon \le 0.
    \end{align*}
    (b) For $t=2,3,\ldots$ and for any $i=1,\dots,n$,
    \begin{align} \nonumber
        &(\frac{3}{2}\delta+h)u_i(t)-(\frac{\delta}{2}+h)u_i(i-1)-\delta u_{i-1}(t)+\\
        &\frac{\delta}{2}(\bar a\delta+\epsilon)\le 0. \tag{\text{condition 3}}
    \end{align}
    \item
    Furthermore, there exists a safe input such that vehicle $i$ is able to keep the minimal allowable distance from vehicle $i-1$ at all times if\\
    (a) For $t=1$,
     \begin{align*}
        &\hat x_i(1)-\hat x_{i-1}(1)+\theta\Big(\hat v_i(1)-\hat v_{i-1}(1)\Big)+(\theta+\delta+\\
        &h)u_i(0)-(\theta+\frac{\delta}{2})u_{i-1}(0)-\frac{1}{2}\delta u_{i-1}(t)\\
        &+2\epsilon\theta+\frac{1}{2}\delta\epsilon+\frac{1}{2}\bar a\delta^2+h\bar a\delta-h\epsilon \le 0.
    \end{align*}
    (b) For $t=2,3,\ldots$,
    \begin{align}\nonumber
        &(\frac{3}{2}\delta+h)u_i(t)-(\frac{\delta}{2}+h)u_i(t-1)-\delta u_{i-1}(t)-\\
        &4\epsilon\theta-\frac{3}{2}\epsilon\delta-\frac{1}{2}\bar a\delta^2\le 0. \tag{\text{condition 4}}
    \end{align}
\end{enumerate}
\end{thm}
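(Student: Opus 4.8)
The plan is to prove both parts by induction on the (integer) time index $t$, treating part~1 (mere existence of a safe input) and part~2 (existence of an input that attains the minimal headway) in parallel, since they differ only in which one-step criterion of Theorem~\ref{thm:1} is propagated: part~1 carries condition~1 forward via condition~3, and part~2 carries condition~2 forward via condition~4. Throughout I would fix a route and a follower index $i$, reducing the analysis to the leader--follower pair $(i-1,i)$; the inter-route safety constraint is identical after replacing the headway $h$ by $\bar h$, so it needs no separate treatment. It is worth noting at the outset that conditions~3 and~4 constrain the control sequence that the controller $\mu$ actually emits, so the statement is really an invariance claim: if the realized inputs obey condition~3 (resp.\ condition~4) at every step and hypothesis~(a) holds, then the one-step criterion is preserved for all $t$. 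The base case is the easy half: hypothesis~(a) at $t=1$ matches the corresponding one-step criterion of Theorem~\ref{thm:1} evaluated at $t=1$, so Theorem~\ref{thm:1} already guarantees a safe control input $u_i(1)$.

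For the inductive step I would assume the relevant one-step condition holds at time $t$, so that $\mu$ returns a feasible $u_i(t)\in U_{feasible}(t)$ keeping vehicle $i$ safe (and, for part~2, making $\lambda_i(t)=0$), and then show the same condition holds at $t+1$. The mechanism is to write the one-step inequality at $t+1$, which involves the delayed observations $\hat x_i(t+1),\hat v_i(t+1)$ together with $u_i(t),u_{i-1}(t),u_{i-1}(t+1)$, and to eliminate the observation terms. To do this I would propagate the state from $t$ to $t+1$ using the model's two update rules, namely the trapezoidal position update $\hat x_i(t+1)=\hat x_i(t)+\tfrac{\delta}{2}\big(\hat v_i(t)+\hat v_i(t+1)\big)$ and the speed-tracking relation $\hat v_i(t+1)\in B_\epsilon(u_i(t))$, together with the worst-case estimator bounds of Section~\ref{sec_estimate} (follower taken at its largest predicted position, leader at its smallest, each inflated by the $\pm\epsilon$, $2\theta\epsilon$ and $\tfrac32\delta\epsilon$ contributions appearing in~\eqref{eq:interval}). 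Substituting these into the $t+1$ inequality and using the inductive hypothesis at $t$ to cancel the residual gap term $\hat x_i(t)-\hat x_{i-1}(t)+\theta(\hat v_i(t)-\hat v_{i-1}(t))$ should collapse the inequality to a relation among control inputs only; matching the surviving coefficients and constants against the required bound should yield exactly condition~3 (part~1) or condition~4 (part~2). By Theorem~\ref{thm:1}, the resulting one-step condition at $t+1$ furnishes a safe (resp.\ minimal-headway) input $u_i(t+1)$, closing the induction.

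The step I expect to be the main obstacle is precisely this elimination of the observation variables $\hat x(t+1),\hat v(t+1)$. Because these are delayed, set-valued realizations, the reduction to a control-only inequality is \emph{not} a plain telescoping of the one-step safety margin $\lambda_i$: the $\theta$-weighted speed terms do not cancel termwise, so instead of differencing the margin I must bound $\hat x_i(t+1)$ and $\hat v_i(t+1)$ by the worst-case values carried over from the step-$t$ prediction and fold those bounds into the step-$t$ condition. The delicate bookkeeping is to check that the latency- and uncertainty-dependent constants line up exactly as written — $2\epsilon\theta$, $\tfrac32\delta\epsilon$, $\tfrac12\bar a\delta^2$, $h\bar a\delta$ and $h\epsilon$ in condition~3, and the shifted constant $-4\epsilon\theta-\tfrac32\epsilon\delta-\tfrac12\bar a\delta^2$ in condition~4 — and to confirm that the input produced by the recurrence stays inside the feasible band $U_{feasible}(t+1)$ imposed by the acceleration limit $\bar a$, which is where the standing assumption $\epsilon\le\bar a\delta$ is used.
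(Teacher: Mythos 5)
Your proposal is correct and follows essentially the same route as the paper: induction on $t$ with part (a) plus Theorem~\ref{thm:1} as the base case, and an inductive step that combines the one-step criterion at time $t$ with condition~3 (resp.\ condition~4) and the worst-case position/speed propagation bounds to recover the one-step criterion at $t+1$. The paper runs the algebra in the forward direction (adding the control-only inequality to the step-$t$ criterion and then applying the update bounds) rather than your reverse elimination of $\hat x(t+1),\hat v(t+1)$, but this is the same computation read the other way.
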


\begin{proof}
\emph{Condition 3.}
By part (a) in condition 3 we know that there exists a feasible input $u_i(1)$ such that at time step 1 vehicle $i$ can keep a safe distance from its leading vehicle.
Next, we prove that when at time step $t$ vehicle $i$ satisfies part (a), then at time step $t+1$ vehicle $i$ also  satisfies the safety constraint when part (b) in condition 3 holds.
\begin{align*}
    &(\frac{3}{2}\delta+h)u_i(t)-(\frac{\delta}{2}+h)u_i(i-1)-\delta u_{i-1}(t)+\frac{\delta}{2}(\bar a\delta+\epsilon)\\
    &\le 0.
\end{align*}
Since $u_{i-1}(t+1)\ge u_{i-1}(t)-\bar a\delta +\epsilon$, then we have
\begin{align*}
    &\Big(u_i(t)+\epsilon)\delta-\Big(u_i(t-1)+\epsilon)\delta+\frac{\delta}{2}\Big(u_i(t-1)+u_i(t)\Big) 
    \\
    &\le h\Big(u_i(t-1)+\epsilon-\bar a\delta)+\frac{\delta}{2}(u_{i-1}(t)+u_{i-1}(t+1)-2\epsilon)\\
    &-h\Big(u_i(t)+\epsilon-\bar a\delta\Big).
\end{align*}
By condition 1, we have
\begin{align*}
    &\hat x_i(t)+\theta\Big(\hat v_i(t)+u_i(t-1)+\epsilon\Big)+\delta\Big(u_i(t-1)+\epsilon\Big)\\
    &\quad-\frac{1}{2}(\bar a-\frac{\epsilon}{\delta})\delta^2
    \\
    &\le 
    \bar x_{i-1}(t)+\theta\Big(\hat v_{i-1}(t)+u_{i-1}(t-1)-\epsilon\Big)+\frac{\delta}{2}\\
    &\quad\times\Big(u_{i-1}(t-1)+u_{i-1}(t)-2\epsilon\Big)-h\Big(u_i(t-1)+\epsilon-\bar a\delta\Big).
\end{align*}
Then we know that
\begin{align*}
    &\hat x_i(t)+\theta\Big(\hat v_i(t)+u_i(t-1)+\epsilon\Big)+\frac{\delta}{2}\Big(u_i(t-1)+u_i(t)\\
    &\quad+2\epsilon\Big)+\Big(u_i(t)+\epsilon\Big)\delta-\frac{1}{2}(\bar a-\frac{\epsilon}{\delta})\delta^2
    \\
    &\le
    \hat x_{i-1}(t)+\theta\Big(\hat v_{i-1}(t)+u_{i-1}(t-1)-\epsilon\Big)+\frac{\delta}{2}\Big(u_{i-1}(t-1)\\
    &\quad+u_{i-1}(t)-2\epsilon\Big)+\frac{\delta}{2}\Big(u_{i-1}(t)+u_{i-1}(t+1)-2\epsilon\Big)\\
    &\quad-h\Big(u_i(t)+\epsilon-\bar a\delta\Big).
\end{align*}
Considering the limitation of acceleration and deceleration, we have
\begin{align*}
    &\hat x_i(t+1)+\theta\Big(\hat v_i(t+1)+u_i(t)+\epsilon\Big) \\
    &\le
    \hat x_i(t)+\theta\Big(\hat v_i(t)+u_i(t-1)+\epsilon\Big)+\frac{\delta}{2}\Big(u_i(t-1)+u_i(t)\\
    &\quad+2\epsilon\Big)
    \\
    &\hat x_{i-1}(t+1)+\theta\Big(\hat v_{i-1}(t+1)+u_{i-1}(t)+\epsilon\Big) \\
    &\ge
    \hat x_{i-1}(t)+\theta\Big(\hat v_{i-1}(t)+u_{i-1}(t-1)-\epsilon\Big)+\frac{\delta}{2}\Big(u_{i-1}(t-1)\\
    &\quad+u_{i-1}(t)-2\epsilon\Big).
\end{align*}
Therefore, we know that
\begin{align*}
    &\hat x_i(t+1)+\theta\Big(\hat v_i(t+1)+u_i(t)+\epsilon\Big)+\Big(u_i(t)+\epsilon\Big)\delta\\
    &\quad-\frac{1}{2}(\bar a-\frac{\epsilon}{\delta})\delta^2 \\
    &\le
    \hat x_{i-1}(t+1)+\theta\Big(\hat v_{i-1}(t+1)+u_{i-1}(t)-\epsilon\Big)+\frac{\delta}{2}\Big(u_{i-1}(t)\\
    &\quad+u_{i-1}(t+1)-2\epsilon\Big)-h\Big(u_i(t)+\epsilon-\bar a\delta\Big).
\end{align*}
By condition 1 we know that at time step $t+1$ vehicle $i$ also  satisfies the safety constraint which completes the proof.

\noindent\emph{Condition 4.}

By part (a) in condition 4 we know that there exists a feasible input $u_i(1)$ such that at time step 1 vehicle $i$ can keep the minimal allowable distance from its leading vehicle.
Next, we prove that when at time step $t$ vehicle $i$ satisfies part (a), then at time step $t+1$ vehicle $i$ also  satisfies part (a) when part (b) in condition 4 holds.
\begin{align*}
    &(\frac{3}{2}\delta+h)u_i(t)-(\frac{\delta}{2}+h)u_i(t-1)-\delta u_{i-1}(t)-4\epsilon\theta\\
    &-\frac{3}{2}\epsilon\delta-\frac{1}{2}\bar a)\delta^2\le 0.
\end{align*}
Since $u_{i-1}(t+1)\le u_{i-1}(t)+\bar a\delta -\epsilon$, then we have
\begin{align*}
    &-2\theta\epsilon+\Big(u_i(t)+\epsilon)\delta-\Big(u_i(t-1)+\epsilon)\delta+\frac{\delta}{2}\Big(u_i(t-1)\\
    &\quad+u_i(t)-2\epsilon\Big) 
    \\
    &\ge 2\epsilon\theta +2\epsilon\delta+h\Big(u_i(t-1)-\epsilon+\bar a\delta)+\frac{\delta}{2}(u_{i-1}(t)\\
    &\quad+u_{i-1}(t+1)-2\epsilon)-h\Big(u_i(t)-\epsilon+\bar a\delta\Big).
\end{align*}
By condition 2 we have
\begin{align*}
    &\hat x_i(t)+\theta\Big(\hat v_i(t)+u_i(t-1)+\epsilon\Big)+\delta\Big(u_i(t-1)+\epsilon\Big)\\
    &\quad+\frac{1}{2}(\bar a-\frac{\epsilon}{\delta})\delta^2
    \\
    &\ge 
    \bar x_{i-1}(t)+\theta\Big(\hat v_{i-1}(t)+u_{i-1}(t-1)-\epsilon\Big)+\frac{\delta}{2}\Big(u_{i-1}(t-1)\\
    &\quad+u_{i-1}(t)-2\epsilon\Big)-h\Big(u_i(t-1)-\epsilon+\bar a\delta\Big).
\end{align*}
Then we know that
\begin{align*}
    &\hat x_i(t)+\theta\Big(\hat v_i(t)+u_i(t-1)-\epsilon\Big)+\frac{\delta}{2}\Big(u_i(t-1)\\
    &\quad+u_i(t)-2\epsilon\Big)+\Big(u_i(t)+\epsilon\Big)\delta-\frac{1}{2}(\bar a+\frac{\epsilon}{\delta})\delta^2
    \\
    &\ge
    \hat x_{i-1}(t)+\theta\Big(\hat v_{i-1}(t)+u_{i-1}(t-1)+\epsilon\Big)+\frac{\delta}{2}\Big(u_{i-1}(t-1)\\
    &\quad+u_{i-1}(t)+2\epsilon\Big)+\frac{\delta}{2}\Big(u_{i-1}(t)+u_{i-1}(t+1)-2\epsilon\Big)\\
    &\quad-h\Big(u_i(t)-\epsilon+\bar a\delta\Big).
\end{align*}
Considering the limitation of acceleration and deceleration, we have
\begin{align*}
    &\hat x_i(t+1)+\theta\Big(\hat v_i(t+1)+u_i(t)+\epsilon\Big) \\
    &\ge
    \hat x_i(t)+\theta\Big(\hat v_i(t)+u_i(t-1)-\epsilon\Big)+\frac{\delta}{2}\Big(u_i(t-1)+u_i(t)\\
    &\quad-2\epsilon\Big),\\
    &\hat x_{i-1}(t+1)+\theta\Big(\hat v_{i-1}(t+1)+u_{i-1}(t)+\epsilon\Big) \\
    &\le
    \hat x_{i-1}(t)+\theta\Big(\hat v_{i-1}(t)+u_{i-1}(t-1)+\epsilon\Big)+\frac{\delta}{2}\Big(u_{i-1}(t-1)\\
    &\quad+u_{i-1}(t)+2\epsilon\Big).
\end{align*}
Therefore, we know that
\begin{align*}
    &\hat x_i(t+1)+\theta\Big(\hat v_i(t+1)+u_i(t)+\epsilon\Big)+\Big(u_i(t)+\epsilon\Big)\delta\\
    &\quad+\frac{1}{2}(\bar a-\frac{\epsilon}{\delta})\delta^2 \\
    &\ge
    \hat x_{i-1}(t+1)+\theta\Big(\hat v_{i-1}(t+1)+u_{i-1}(t)-\epsilon\Big)+\frac{\delta}{2}\\
    &\quad\times\Big(u_{i-1}(t)+u_{i-1}(t+1)-2\epsilon\Big)-h\Big(u_i(t)-\epsilon+\bar a\delta\Big).
\end{align*}
By condition 2 and condition 3 we know that at time step $t+1$ there also exists a control input $u_i(t+1)$ such that vehicle $i$ also can  keep the minimal allowable distance from its leading vehicle.
\end{proof}

If $\lambda_i(t)=0$, then we have:
\begin{align*}
    u_i(t)=&\theta\Big(\hat v_{i-1}(t)-\hat v_i(t)+u_{i-1}(t-1)-u_i(t-1)\Big) \\ 
     &-2\delta\epsilon- 2\theta\epsilon+\hat x_{i-1}(t)-\hat x_i(t)+\frac{1}{2}\delta\Big(u_{i-1}(t-1) \\
     &-u_i(t-1)+u_{i-1}(t)\Big)/(\frac{1}{2}\delta+h).
\end{align*}

Then we have the controller $\mu$:
\begin{align*}
    u_i(t)&\in U_{feasible}(t)\\
    &=\mu_i(x_i(t-1),v_i(t-1),x_{i-1}(t-1),v_{i-1}(t-1)) \\
    &=\begin{cases}
     \theta\Big(\hat v_{i-1}(t)-\hat v_i(t)+u_{i-1}(t-1)-u_i(t-1)\Big)\ \\ 
     -2\delta\epsilon- 2\theta\epsilon+\hat x_{i-1}(t)-\hat x_i(t)+\frac{1}{2}\delta\Big(u_{i-1}(t-1)\ \\
     -u_i(t-1)+u_{i-1}(t)\Big)/(\frac{1}{2}\delta+h),\ \text{condition 1} \  \\
     \text{and condition 2},\
    \\
    \arg \min{\lambda_i(t)},\ \text{condition 1 and not condition 2},
    \\
    not\ safe, \ \text{else}.
    \end{cases}
\end{align*}

%%%%%%%%%%%%%%%%%%%%%%
\subsection{Throughput evaluation}
A general evaluation of the intersection's throughput depends on the arrival process (e.g. Poisson process) of vehicles, which is beyond the scope of this paper.
Hence, we consider the worst-case scenario, where vehicles from different directions go through the intersection in alternate. This is the worst case, since the minimal headway between two vehicles in different directions is larger than in the same direction.
We also consider a nominal case where vehicles periodically enter the neighborhood with the nominal crossing speed.

Let $D$ be the crossing time interval for a vehicle, and let $h$ be the minimal headway between two vehicles in different directions. The capacity $F$ of the intersection is lower bounded by
\begin{align}\label{eq:capacity0}
    F\ge\frac{1}{D+h}.
\end{align}

Since the capacity varies with the actual situation, here we only consider a specific situation when all the vehicles are able to satisfy the condition 1 and 2 in Theorem~\ref{thm:1}.
\begin{prp}\label{prp:capacity}
Under the specific situation when all the vehicles are able to satisfy the condition 1 and 2 in Theorem~\ref{thm:1}, capacity F  is lower bounded by:
\begin{align}\label{eq:capacity}
    F \ge \frac{\bar v}{2\epsilon(\theta+\delta)+h\bar v}.
\end{align}
\end{prp}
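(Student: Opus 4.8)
The plan is to specialize the general throughput bound \eqref{eq:capacity0}, namely $F \ge 1/(D+h)$, to the regime in which conditions 1 and 2 of Theorem~\ref{thm:1} hold simultaneously. By the second part of Theorem~\ref{thm:1}, in this regime the controller can be chosen so that each vehicle keeps \emph{exactly} the minimal allowable distance from its leading vehicle; this tightest-possible spacing is precisely the configuration that maximizes throughput, so the task reduces to computing the crossing time interval $D$ attained here and substituting it into \eqref{eq:capacity0}.

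First I would fix the nominal crossing speed at the maximal allowable speed $\bar v$, since discharging vehicles as fast as possible (the stated control objective) drives the crossing speed up to $\bar v$. At this speed the same-direction safety constraint $|x_i - x_{i-1}| \ge h v$ forces a nominal spatial headway of $h\bar v$ between consecutive vehicles. The crucial additional ingredient is that, because of latency, the controller can only guarantee safety against the predicted position set $\tilde X(t+\theta+1)$ rather than the true position, and the width of this set is exactly the interval $\Delta = 2\epsilon(\theta+\delta)$ already computed in \eqref{eq:interval}. Hence, to stay safe against the worst realization inside the prediction set, the effective spatial separation that must be reserved between two consecutive crossing vehicles is $h\bar v + \Delta$.

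Next I would convert this spatial separation into a temporal crossing interval: since vehicles pass a fixed reference cross-section at speed $\bar v$, a spatial gap of $h\bar v + \Delta$ corresponds to a time gap of $(h\bar v + \Delta)/\bar v = h + 2\epsilon(\theta+\delta)/\bar v$. Identifying the uncertainty-induced portion $D = 2\epsilon(\theta+\delta)/\bar v$ as the crossing time interval, the general bound \eqref{eq:capacity0} then yields
\begin{align*}
F \ge \frac{1}{D+h} = \frac{1}{\dfrac{2\epsilon(\theta+\delta)}{\bar v} + h} = \frac{\bar v}{2\epsilon(\theta+\delta) + h\bar v},
\end{align*}
which is exactly \eqref{eq:capacity}.

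The step I expect to be the main obstacle is the second one: rigorously justifying that the position-uncertainty interval $\Delta$ from \eqref{eq:interval} is precisely the extra spatial buffer that must be added on top of the nominal headway $h\bar v$, and that no larger buffer is needed once conditions 1 and 2 hold. This requires tracking how the set-valued estimator $\tilde X$ enters the minimal-distance equality $\max \tilde X_i(t+1+\theta) = \min \tilde X_{i-1}(t+1+\theta) - h\, u_i(t)$ used in the proof of Theorem~\ref{thm:1}, so that the uncertainty width appears additively and translates cleanly into the claimed time interval upon division by $\bar v$; one must also confirm that the worst-case alternating-direction schedule does not inflate $D$ beyond this value.
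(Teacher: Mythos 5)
Your proposal matches the paper's own argument: both take the prediction-interval width $\Delta=2\epsilon(\theta+\delta)$ from \eqref{eq:interval}, set $D=\Delta/\bar v$ as the crossing time interval at the nominal speed $\bar v$, and substitute into the general bound $F\ge 1/(D+h)$ to obtain \eqref{eq:capacity}. The obstacle you flag (rigorously justifying that $\Delta$ is exactly the required extra buffer on top of the nominal headway) is not resolved in the paper either, which simply reads this off from the $x$--$t$ diagram in Fig.~\ref{fig:capacity}, so your treatment is at the same level of rigor as the original.
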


\begin{figure}[hbt]
    \centering
    \includegraphics[width=0.45\textwidth]{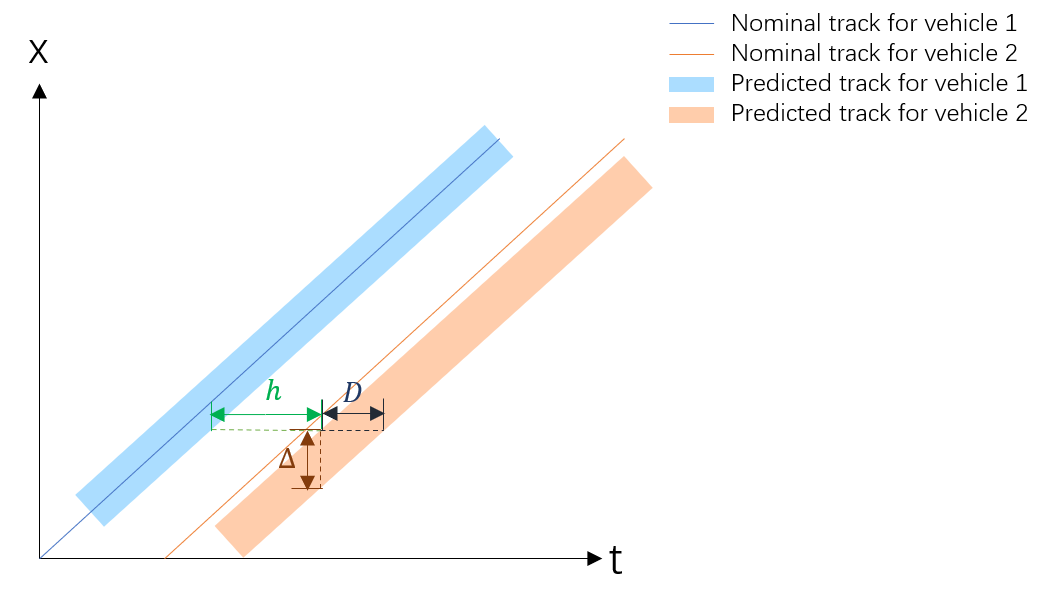}
    \caption{Trajectory sets of two consecutive vehicles.}
    \label{fig:capacity}
\end{figure}

\begin{proof}
By Equation~(\ref{eq:interval}) we have that the interval length of the predicted position at next time step is $2\epsilon(\theta+\delta)$. 
Then we have the $x-t$ curve shown in Figure~\ref{fig:capacity}, in which
$h$ is the distance between the lower bound of the leading vehicle and the upper bound of the following vehicle parallel to the time axis,
$D$ is the mapping of the position interval on the time axis.
Then when $\bar v$ denotes the maximal speed, which is the nominal speed of the leading vehicle, then we have
\begin{align*}
    D=\frac{\Delta}{\bar v}=\frac{2\epsilon(\theta+\delta)}{\bar v}.
\end{align*}
Then by \eqref{eq:capacity0} we have that under the specific situation, capacity $F$  is lower bounded by \eqref{eq:capacity}.
\end{proof}

Note that \eqref{eq:capacity} givens the worst-case throughput, since it considers the case where vehicles from orthogonal directions cross the intersection in alternation.
In practice, vehicles may cross immediately after a leading vehicle in the same direction, which typically leads to a higher throughput.
One way to estimate the typical throughput is to assume Poisson arrivals of vehicles and then study the stochastic stability of the system; see \cite{dai2020}.
 \section{Concluding remarks}\label{sec_conclusion}

In this paper, we solved the robust control problem of vehicle coordination at signal-free intersections with communication latency.
To this end, we designed a controller that coordinates the speed of vehicles which is robust against communication latency and ensures safety.
Based on the controller we discussed the safety criterion both for one time step and time series.
Finally, we studied the relation between latency and intersection capacity under a specific situation.

\bibliographystyle{IEEEtran}
\bibliography{Bibliography}

\end{document}